\tikzstyle{box}=[shape=rectangle, text height=1.5ex, text depth=0.25ex, yshift=0.5mm, fill=white, draw=black, minimum height=5mm, yshift=-0.5mm, minimum width=5mm, font={\small}]
\tikzstyle{Z dot}=[inner sep=0mm, minimum size=2mm, shape=circle, draw=black, fill={rgb,255: red,221; green,255; blue,221}]
\tikzstyle{Z phase dot}=[minimum size=5mm, font={\footnotesize\boldmath}, shape=rectangle, rounded corners=2mm, inner sep=0.2mm, outer sep=-2mm, scale=0.8, tikzit shape=circle, draw=black, fill={rgb,255: red,221; green,255; blue,221}, tikzit draw=blue]
\tikzstyle{X dot}=[Z dot, shape=circle, draw=black, fill={rgb,255: red,255; green,136; blue,136}]
\tikzstyle{X phase dot}=[Z phase dot, tikzit shape=circle, tikzit draw=blue, fill={rgb,255: red,255; green,136; blue,136}, font={\footnotesize\boldmath}]
\tikzstyle{hadamard}=[fill=yellow, draw=black, shape=rectangle, inner sep=0.6mm, minimum height=1.5mm, minimum width=1.5mm]
\tikzstyle{vertex}=[inner sep=0mm, minimum size=1mm, shape=circle, draw=black, fill=black]
\tikzstyle{vertex set}=[inner sep=0mm, minimum size=1mm, shape=circle, draw=black, fill=white, font={\footnotesize\boldmath}]
\tikzstyle{hadamard edge}=[-, dashed, dash pattern=on 2pt off 0.5pt, thick, draw={rgb,255: red,68; green,136; blue,255}]
\tikzstyle{brace edge}=[-, tikzit draw=blue, decorate, decoration={brace,amplitude=1mm,raise=-1mm}]
\tikzstyle{diredge}=[->]
\newcommand{\CZ}{\ensuremath{\textrm{CZ}}\xspace}
\theoremstyle{definition}
\newtheorem{theorem}{Theorem}[section]
\newtheorem*{lemma*}{Lemma}
\newtheorem*{proposition*}{Proposition}
\newtheorem{proposition}[theorem]{Proposition}
\newcommand\new[1]{#1}
\title{Constructing quantum circuits with global gates}
\author{\censor{John van de Wetering}}
\affiliation{\censor{Institute for Computing and Information Sciences, Radboud University Nijmegen}}
\affiliation{\censor{University of Oxford Computer Science Department}}
\begin{document}

\maketitle

\begin{abstract}
	There are various gate sets that can be used to describe a quantum computation. A particularly popular gate set in the literature on quantum computing consists of arbitrary single-qubit gates and 2-qubit CNOT gates. A CNOT gate is however not always the natural multi-qubit interaction that can be implemented on a given physical quantum computer, necessitating a compilation step that transforms these CNOT gates to the native gate set.
	An especially interesting case where compilation is necessary is for ion trap quantum computers, where the natural entangling operation can act on more than 2 qubits and can even act globally on all qubits at once. This calls for an entirely different approach to constructing efficient circuits. In this paper we study the problem of converting a given circuit that uses 2-qubit gates to one that uses global gates. 
	Our three main contributions are as follows. 
	First, we find an efficient algorithm for transforming an arbitrary circuit consisting of Clifford gates and arbitrary phase gates into a circuit consisting of single-qubit gates and a number of global interactions proportional to the number of non-Clifford phases present in the original circuit. 
	Second, we find a general strategy to transform a global gate that targets all qubits into one that targets only a subset of the qubits. This approach scales linearly with the number of qubits that are not targeted, in contrast to the exponential scaling reported in (Maslov \& Nam, N.~J.~Phys.~2018).
	Third, we improve on the number of global gates required to synthesise an arbitrary $n$-qubit Clifford circuit from the $12n-18$ reported in (Maslov \& Nam, N.~J.~Phys.~2018) to $6n-8$.
\end{abstract}

\section{Introduction}

There are many different physical implementations of qubits for purposes of quantum computation, including superconducting electronic circuits~\cite{arute2019quantum}, photons~\cite{chapman2016experimental}, or nitrogen-vacancy centers~\cite{zargaleh2018nitrogen}.
A particularly promising approach is given by ion trap quantum computing~\cite{bruzewicz2019trapped}.
For most quantum computing architectures, the natural entangling operation targets just two qubits. Ion trap qubits form an exception in that the entangling operation, the \emph{M\o{}lmer-S\o{}rensen} interaction~\cite{molmersorensen1999}, naturally targets the entire array of qubits and can be focused to target 
\new{some subsets of qubits}~\cite{martinez2016compiling,debnath2016demonstration,grzesiak2020efficient,figgatt2019parallel}. 

In this setting, the natural set of quantum gates consists of arbitrary single-qubit unitaries, together with the multi-qubit M\o{}lmer-S\o{}rensen (MS) gate (also variously known as the XX coupling or the Ising gate). We distinguish here two different classes of MS gate: targeted and untargeted. The targeted gate can be applied to any specific subsets of qubits, while untargeted gates always target the entire array of available qubits. Of course, targeted gates are more versatile, but they are also in general harder to engineer. 
Following~\cite{maslov2018use} we will refer to (un)targeted M\o{}lmer-S\o{}rensen gates acting on any number of qubits as \emph{global} M\o{}lmer-S\o{}rensen gates, or GMS gate, for short. 

Single-qubit gates generally can be implemented with high fidelity, while MS gates are considerably more noisy. In addition, MS gates simply take more time to run on the machine than a single-qubit interaction. Hence, by minimising the number of MS gates needed to implement a given computation we can drastically decrease the cost of computation and increase its fidelity. 

Compared to the extensive literature on synthesising quantum circuits with two-qubit gates (in particular CNOT gates), the literature on constructing optimised quantum circuits with global gates is quite sparse. Nevertheless, some results are known.
The work in this paper was motivated by~\cite{maslov2018use}. They find general strategies for synthesising Clifford circuits, $n$-controlled Toffoli gates and the Quantum Fourier Transform, building on the results of~\cite{ivanov2015efficient}. Some of our results are direct generalisations of those found in~\cite{maslov2018use}. In~\cite{groenland2020signal} they find a way to construct arbitrary N-controlled phase rotations using just $2N$ untargeted GMS gates. 

All these result are dedicated to synthesising specific families of unitaries. This paper will focus instead on synthesising circuits with GMS gates from arbitrary circuits built using regular CNOT gates. Such a general approach was also taken in~\cite{martinez2016compiling} where they made a brute-force compiler for transforming unitaries into optimal GMS circuits. However, the cost of this algorithm rises steeply with the number of qubits, and they only used it to synthesise circuits for up to 4 qubits.

Our main result is an efficient algorithm for transforming a given quantum circuit consisting of Clifford gates and arbitrary single-qubit phase gates into a circuit consisting of single-qubit phase gates and GMS gates. We give a version using both targeted and untargeted GMS gates. Note that there is a naive algorithm that does this task (in the targeted case), that simply translates each CNOT gate into a 2-qubit GMS gate. This results in an unnecessarily large number of GMS gates for most circuits that contain a realistic number of CNOT gates. 
Our algorithm instead scales with the number of non-Clifford phase gates present in the original circuit. Specifically, given a $n$-qubit circuit containing $N$ non-Clifford phase gates, we require at most $N+6n-8$ targeted GMS gates, or $2N+O(n^2/\log n)$ untargeted GMS gates. For most useful circuits $N\gg n$ and hence the number of targeted GMS gates is proportional to $N$. 

Note that for many concrete small circuits such as $n$-controlled Toffoli gates, the handcrafted results of for instance~\cite{maslov2018use,groenland2020signal} require a lower number of GMS gates than the bounds we find. The value of our algorithm lies in its general applicability. 
Furthermore, for more complicated circuits that consist for instance of a series of Toffoli gates, a large number of the non-Clifford gates that are needed in a naive synthesis can be canceled out by any of the large number of the T-count optimising algorithms in the literature~\cite{amy2016t,heyfron2018efficient,kissinger2019tcount,deBeaudrap2020Techniques}, which might make our algorithm more efficient than simply synthesising each Toffoli gate in turn by the methods of Refs.~\cite{maslov2018use,groenland2020signal}.
\new{
We also expect our algorithm to work better when the input circuit naturally consists of gates resembling exponentiated Pauli's, such as for Trotterised chemistry circuits~\cite{phaseGadgetSynth,cowtan_generic_2020}.
}

Our construction works by first converting the original circuit into a series of exponentiated Pauli gates as in~\cite{litinski2019game,zhang2019optimizing}, and then each exponentiated Pauli is synthesised in turn. When this is done, there is still a remaining Clifford circuit that needs to be synthesised. Here we improve the $12n-18$ GMS gates needed to synthesise a Clifford circuit given in~\cite{maslov2018use} to $6n-8$, by using a more optimal Clifford normal form.

In~\cite{maslov2018use}, a construction is given for transforming a pair of untargeted GMS gates into a targeted GMS gate that does not interact with some chosen qubit. By iterating this procedure, targeted GMS gates can be constructed that target any subset of qubits. However, their description requires an exponential number of untargeted GMS gates as the number of targeted qubits decreases. We provide a variation on this construction that instead scales linearly as the number of targeted qubits decreases. This allows us to synthesise an arbitrary $n$-qubit Clifford circuit using $O(n^2/\log n)$ untargeted GMS gates.

The structure of the paper is as follows. In Section~\ref{sec:preliminaries} we cover the necessary preliminaries on exponentiated Pauli gates and GMS gates. Then in Section~\ref{sec:clifford} we cover the construction of arbitrary Clifford circuits using targeted GMS gates. In Section~\ref{sec:constructingCircuit} we present our main algorithm for transforming arbitrary circuits into GMS circuits.
Then in Section~\ref{sec:untargeted} we will find modifications to these results that work for untargeted GMS gates. Finally, in Section~\ref{sec:conclusion} we end with some concluding remarks.

\section{Preliminaries}\label{sec:preliminaries}
\subsection{Exponentiated Pauli gates}\label{sec:expPauli}

We denote by $X$, $Y$ and $Z$ the standard Pauli gates, and let $I$ denote the identity. We write $P_i$ for a Pauli $P\in\{X,Y,Z,I\}$ to denote that $P$ is applied to the qubit $i$. In this same manner, the gate $P_iP'_j$ denotes that $P$ is applied to $i$ while $P'$ is applied to $j$. Let $\vec{P}$ denote a `string' of Pauli gates , i.e.~an operator that applies a Pauli or identity to each qubit in the circuit. We then define $\vec{P}(\alpha) := \exp(-i\frac\alpha2 \vec{P})$ to be its associated \emph{exponentiated Pauli} gate. For instance, $Z_i(\alpha)$ is the standard $Z$ phase gate (up to global phase) over an angle $\alpha$ applied to qubit $i$. For instance, the $S$ gate is given by $S_i := Z_i(\frac\pi2)$ and the $T$ gate is $T_i := Z_i(\frac\pi4)$.

Given any unitary quantum circuit consisting of Clifford gates and phase gates over $Z$ we can efficiently transform it into a circuit consisting of just exponentiated Pauli gates, see for instance \cite{litinski2019game} or \cite{zhang2019optimizing}). Let us give a quick overview of this procedure.

Starting from the beginning of the circuit, pick the first non-Clifford phase gate you encounter. We can view this phase gate as an exponentiated Pauli $Z_i(\alpha)$. If it is the first gate of the circuit we are done and move to the next non-Clifford gate. If it is not the first gate then there must a Clifford gate $C$ before it. Due to the definitional property of the Clifford gates $C^\dagger Z_i C = \vec{P}$ for some Pauli string $\vec{P}$ and hence we calculate 
$$Z_i(\alpha)C = CC^\dagger\exp(-i\frac\alpha2 Z_i)C = C\exp(-i\frac\alpha2 C^\dagger Z_iC) = C\exp(-i\frac\alpha2 \vec{P}).$$
There is now one less Clifford gate before the exponentiated Pauli. Repeating this procedure we bring the exponentiated Pauli all the way to the front. We then continue the procedure with the next non-Clifford phase.
The circuit resulting from this procedure consists of a series of exponentiated Pauli gates, followed by some Clifford circuit. We could then synthesise the Clifford circuit and write each CNOT,CZ, S and H gate as an exponentiated Pauli, but it will often be useful to do something else with these remaining Clifford gates.

\subsection{Global interactions}

The type of global interaction we assume is known as an `Ising-type' interaction, and can be implemented on ion trap quantum computers using the Mølmer-Sørensen (MS) interaction, which is why~\cite{maslov2018use} referred to these interactions as `global Mølmer-Sørensen' (GMS) gates. The `local' MS gate $XX$ acting on qubits $i$ and $j$ is given by the exponentiated Pauli $XX_{ij}(\alpha) = \exp(-i \frac{\alpha}{2} X_iX_j)$. Here $\alpha$ is a parameter that can be modified in the experimental apparatus. The global MS gate acting on a set of qubits $S$ is given by 
\begin{equation}\label{eq:GMS}
\GMS_S(\alpha) = \exp(-i \frac{\alpha}{4}\sum_{i\neq j\in S} X_iX_j) = \prod_{i\neq j\in S} \exp(-i \frac{\alpha}{4} X_iX_j).
\end{equation}

This last equality follows because all the gates $X_iX_j$ commute with one another. Note that we use $\frac{\alpha}{4}$ instead of $\frac{\alpha}{2}$ to compensate for the double counting of indices, as the MS gate is symmetric in the two qubits it acts on. 
Eq.~\eqref{eq:GMS} shows that the global MS gate corresponds to applying a local MS gate to every pair of qubits it acts on.
Note that in \cite{maslov2018use} they allow each pair of qubits to have a different interaction strength $\alpha$. We don't consider that possibility in this paper.

For us it will be convenient to consider a related gate. When we apply a Hadamard gate before and after every qubit that the GMS gate acts on, the Pauli $X$ is changed to a Pauli $Z$: $\GZZ_S(\alpha) := H^{\otimes n}~\GMS_S(\alpha)~H^{\otimes n} = \prod_{i< j\in S} ZZ_{i,j}(\alpha)$.
As is shown in for instance \cite{kissinger2017MBQC}, we can easily transform $ZZ_{i,j}(\frac\pi2)$ into a CZ gate (up to global phase): $\CZ_{i,j} = S^\dagger_i S^\dagger_j ZZ_{i,j}(\frac\pi2)$, where the S gate is the $Z(\frac\pi2)$ rotation.

Combining these observations, we see that the $\GMS_S(\frac\pi2)$ gate can be transformed by local Cliffords into an application of a CZ gate to every pair of qubits in the set $S$. Let us call this gate $\GCZ_S$, for `global CZ gate'. Hence, for the purposes of synthesising circuits where the metric is the number of \GMS gates, we can equivalently work with \GZZ or \GCZ gates, at the cost of potentially introducing additional local Clifford gates (that in most cost models for circuits are considered cheap).

We will make a distinction between two different types of global interactions. The ones described above we will call \emph{targeted}, because we can apply them to any subset of the qubits of the device. In Section~\ref{sec:untargeted} we will also consider \emph{untargeted} global interactions. These are gates that always interact on all the qubits at once. Hence, more local 2-qubit interactions will have to be implemented using some more clever methods. If we don't specify which type of interaction is meant, a global interaction is understood to be targeted.

\section{Synthesising Clifford circuits using targeted global gates}\label{sec:clifford}

A naive way to synthesise a Clifford circuit using global gates is to synthesise it using regular CNOT gates and single-qubit unitaries and constructing each CNOT gate using a global gate. As normal forms of $n$-qubit Clifford unitaries contains $O(n^2/\log(n))$ CNOT gates~\cite{markov2008optimal}, this approach would also require $O(n^2/\log(n))$ GMS gates.

As was shown in \cite{maslov2018use}, a better strategy is to synthesise an entire layer of CNOT gates at once. This allows one to take advantage of the efficient implementation of what is called a \emph{fan-out gate} in \cite{maslov2018use}, a series of CNOT gates with a common control qubit:
\new{
\ctikzfig{fan-out}
\emph{Fan-in gates}, where the CNOT gates instead have a common target, can be constructed in a similar way.
}

Using fan-out gates, it was shown in \cite{maslov2018use} that an arbitrary $n$-qubit linear Boolean circuit, i.e.~a CNOT circuit, can be synthesised using just $4n-6$ GMS gates. 
\new{
    Note that this is an almost quadratic improvement over the $O(n^2/\log(n))$ CNOT gates required to synthesise an arbitrary CNOT circuit.
}
In combination with the CPCPHPCPC normal form for Clifford circuits of~\cite{aaronsongottesman2004} that consists of single qubit gates and four layers of CNOT gates, this results in a synthesis strategy for arbitrary Clifford circuits requiring $12n-18$ GMS gates.%
\footnote{
\new{Two of the CNOT layers are arbitrary and hence require $4n-6$ GMS gates each. However, the other two CNOT layers correspond to upper triangular Boolean matrices and require only $2n-3$ each.}
}

There are however also other Clifford normal forms, and some of those result in a more efficient implementation. 
\new{
This is especially the case when the normal form requires CZ layers instead of CNOT layers, such as the Clifford normal form presented in~\cite{cliffsimp} which consists of layers H-S-CZ-CNOT-H-CZ-S-H.
}
These CZ layers can be implemented using GCZ gates (and thus GMS gates) even more efficiently than a CNOT layer:
\begin{proposition}\label{prop:CZ-construction}
	An $n$-qubit circuit consisting of just CZ gates can be implemented using local Clifford gates and at most $n-1$ targeted GMS gates.
\end{proposition}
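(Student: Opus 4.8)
The plan is to translate the statement into linear algebra over $\mathbb{F}_2$ and then give a recursive vertex-peeling construction. First I would invoke the preliminaries to reduce from GMS gates to GCZ gates: since $\GCZ_S$ is obtained from $\GMS_S(\frac\pi2)$ by conjugating with single-qubit Hadamards and $S^\dagger$ gates, it suffices to bound the number of $\GCZ$ gates, each of which costs exactly one targeted GMS gate plus some local Cliffords that fall within the (unconstrained) local-Clifford budget of the statement.

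Next I would set up the combinatorial encoding. Because all CZ gates are diagonal and mutually commute, and $\CZ^2=I$, an arbitrary CZ circuit is exactly determined by the \emph{graph} $G$ on the $n$ qubits whose edge $\{i,j\}$ is present if and only if an odd number of $\CZ_{i,j}$ gates appear; composing CZ circuits corresponds to taking the symmetric difference, i.e.\ the $\mathbb{F}_2$-sum, of edge sets. In this language $\GCZ_S$ is precisely the complete graph $K_S$ on the vertex subset $S$. The proposition thus reduces to the purely combinatorial claim that every graph on $n$ vertices can be written as an $\mathbb{F}_2$-sum of at most $n-1$ complete graphs.

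For the construction I would peel off vertices one at a time, from vertex $n$ down to vertex $2$. When processing vertex $k$, let $N_k$ be its current neighbourhood inside $\{1,\dots,k-1\}$ and apply $\GCZ_{\{k\}\cup N_k}$. This toggles exactly the edges from $k$ to $N_k$, clearing every edge incident to $k$ among the lower-indexed qubits, at the cost of the unwanted side effect of toggling all edges within $N_k$, i.e.\ adding the clique $K_{N_k}$. Crucially this side effect lives entirely among $\{1,\dots,k-1\}$, so it is simply folded into the residual graph that the remaining steps still have to produce, and later steps, which touch only vertices $<k$, never disturb vertex $k$ again. Iterating over the $n-1$ vertices $n,\dots,2$ uses at most one GCZ gate each, for a total of at most $n-1$.

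The verification I would still need is a short induction confirming that this greedy peeling really reaches the empty graph: once vertex $k$ has been cleared it stays cleared, so after all vertices are processed every edge has been accounted for and the $\mathbb{F}_2$-sum of the applied cliques equals $G$. The step I would watch most carefully, and the only real subtlety, is the bookkeeping of the induced-clique side effects $K_{N_k}$: I must make sure they only ever propagate downward into the not-yet-processed part of the graph, so that they are absorbed for free into the recursion and keep the count at $1+(n-2)=n-1$ rather than compounding.
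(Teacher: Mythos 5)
Your proposal is correct and follows essentially the same route as the paper's proof: both peel off one qubit at a time, applying a single $\GCZ$ gate on that qubit together with its current neighbourhood, and both observe that the clique-shaped side effect lands entirely on the not-yet-processed qubits and is absorbed by later steps. The only differences are cosmetic --- you iterate from qubit $n$ downward and phrase things as an $\mathbb{F}_2$-sum of cliques, whereas the paper iterates from qubit $1$ upward, cancels the CZ gates of the given circuit directly, and then appeals to self-inverseness of CZ circuits.
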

\begin{proof}
We can construct a GCZ gate out a GMS gate using local Cliffords, so it suffices to show the claim for GCZ gates.
Suppose given an $n$-qubit CZ circuit. Pick an ordering on those qubits involved in at least one CZ gate in this circuit. We may assume there is at most one CZ gate between any pair of qubits, since otherwise pairs of gates would cancel.

Let $S_1$ be the set of qubits that have a CZ gate to qubit $1$. 
\new{
Apply a $\GCZ_{S_1\cup \{1\}}$ gate to this circuit. This might introduce additional CZ gates between the qubits of $S_1$, but crucially, it cancels all the CZ gates between qubit $1$ and the qubits of $S_1$. 
}
Hence, the remaining CZ gates do not involve qubit $1$. 

Now, in the resulting circuit, let $S_2$ be the set of qubits that have a CZ gate to qubit $2$. 
Apply a $\GCZ_{S_2\cup \{2\}}$ gate to this circuit. Again, this cancels all CZ gates between qubit $2$ and the other qubits. Note that this does not introduce new CZ gates to qubit $1$. Hence, by repeating this procedure until we get to qubit $n-1$, in which case the only possible remaining CZ gate is $\CZ_{n-1,n}$, we will have succeeded in canceling all remaining CZ gates, and we are left with the identity circuit.
Hence, our sequence of GCZ gates is the inverse of the CZ circuit we started with.
As CZ circuits are self-inverse, the series of GCZ gates we have just constructed is an implementation of the CZ circuit.
\end{proof}

\new{
For an example of this procedure see Figure~\ref{fig:CZ-circuit}. Note that an arbitrary CZ circuit can consist of up to $n^2/2$ CZ gates, so this proposition again shows a quadratic improvement in using GMS gates.

\begin{figure}
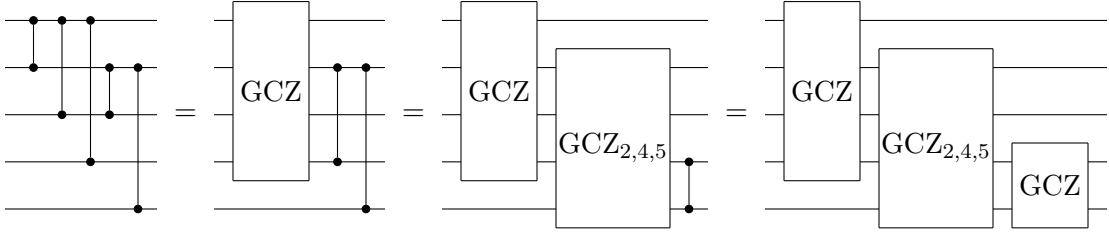

\ctikzfig{CZ-circuit}
\caption{A CZ circuit and its synthesis using GCZ gates.}
\label{fig:CZ-circuit}
\end{figure}
}

Combining the synthesis of CZ circuits and CNOT circuits using GMS gates we get the following improved translation of arbitrary Clifford circuits into GMS circuits:

\begin{theorem}\label{thm:synth-clifford}
	An $n$-qubit Clifford unitary can be implemented using local Clifford gates and at most $6n-8$ GMS gates.
\end{theorem}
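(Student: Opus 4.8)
The plan is to take an existing Clifford normal form off the shelf and bound the cost of each of its layers using the two synthesis results already in hand. Concretely, I would use the normal form of~\cite{cliffsimp}, which expresses any $n$-qubit Clifford unitary as a sequence of layers H-S-CZ-CNOT-H-CZ-S-H. The H and S layers consist solely of single-qubit Clifford gates and are therefore free in our cost model; only the two CZ layers and the single CNOT layer can contribute \GMS gates.

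I would then bound each entangling layer independently. Each of the two CZ layers is an arbitrary CZ circuit, so by Proposition~\ref{prop:CZ-construction} it can be implemented with local Cliffords and at most $n-1$ targeted \GMS gates, giving a combined cost of $2(n-1)=2n-2$. The single CNOT layer is an arbitrary linear Boolean circuit, which by the result of~\cite{maslov2018use} quoted above can be synthesised using $4n-6$ \GMS gates. Summing yields $2(n-1)+(4n-6)=6n-8$ targeted \GMS gates, exactly the claimed bound.

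The remaining work is purely bookkeeping. The single-qubit gates of the normal form, together with the local Cliffords generated when converting \GMS gates into \GCZ (or \GZZ) form inside Proposition~\ref{prop:CZ-construction}, must be shown to merge into the surrounding single-qubit layers without producing any further entangling gates. Since each synthesis step emits only local Cliffords as overhead, this composition is immediate: all such gates act on individual wires and can be absorbed into the adjacent local-Clifford layers, so they contribute nothing to the \GMS count.

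The one point that demands attention---and the closest thing to an obstacle---is confirming that the chosen normal form genuinely contains only a single CNOT layer and two CZ layers, so that the arithmetic yields $6n-8$ rather than a larger constant. This is exactly why the CZ-heavy normal form of~\cite{cliffsimp} is preferable to the CPCPHPCPC form used in~\cite{maslov2018use}: trading expensive CNOT layers (at $4n-6$ each) for cheap CZ layers (at $n-1$ each) is what drives the improvement from $12n-18$ down to $6n-8$.
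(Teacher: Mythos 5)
Your proposal is correct and follows essentially the same route as the paper: it invokes the H-S-CZ-CNOT-H-CZ-S-H normal form of~\cite{cliffsimp}, charges $4n-6$ \GMS gates to the single CNOT layer via~\cite{maslov2018use} and $n-1$ to each CZ layer via Proposition~\ref{prop:CZ-construction}, and absorbs all remaining single-qubit Cliffords into the local layers. No substantive differences from the paper's argument.
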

\begin{proof}
	In~\cite{cliffsimp}, a normal form for Clifford circuits is found that contains two layers of CZ gates and a single layer of CNOTs (and all other layers requiring only single-qubit Clifford unitaries). The CNOT layer requires at most $4n-6$ GMS gates~\cite{maslov2018use} and each CZ layer requires at most $n-1$ GMS gates by the previous proposition. 
\end{proof}

To the authors knowledge, every other normal form for Clifford circuits contains at least two CNOT layers, and hence, this is the current best bound on the number of GMS gates needed to implement a Clifford unitary.

\section{Constructing arbitrary circuits using targeted global gates}\label{sec:constructingCircuit}

In this section we will show how an arbitrary unitary circuit can be synthesised using targeted \GMS gates and single-qubit phase gates. We assume the unitary we wish to synthesise is given in the universal gate set consisting of Clifford gates and arbitrary single qubit Z-rotations. We will refer to this gate set as the Clifford+Phases gate set. Concretely, the given circuit can consist of the Clifford gates S, Hadamard, CNOT and CZ, supplemented with $Z(\alpha)$ gates for arbitrary $\alpha\in [0,2\pi]$. Our synthesis will contain a number of \GMS gates that is directly proportional to the number of non-Clifford phase gates, i.e.~the number of $Z(\alpha)$ gates for which $\alpha\neq k\frac\pi2$ for all $k\in \mathbb{Z}$.
As \GCZ gates are equivalent up to local Clifford gates to $\GMS(\frac\pi2)$ gates we are warranted in using \GCZ gates in our synthesis instead of \GMS gates, as in the previous section.

As shown in Section~\ref{sec:expPauli}, a Clifford+Phases circuits can be easily transformed into a series of exponentiated Pauli gates. Let us therefore first see how to construct an arbitrary exponentiated Pauli using GCZ gates and single qubit gates. 

Let $\vec{P}$ be some Pauli string. We can always find a circuit of local Cliffords $C_1\otimes \cdots \otimes C_n$ such that $(C_1\otimes \cdots \otimes C_n)\vec{P}(C_1\otimes \cdots \otimes C_n)^\dagger = \vec{P}'$ where $\vec{P}'$ consists of just identities and $Z$ gates: $\vec{P}' = Z_{i_1}Z_{i_2}\cdots Z_{i_k}$.
Hence:
$$\vec{P}(\alpha) := \exp(-i\frac\alpha2 \vec{P}) = (C_1\otimes \cdots \otimes C_n)^\dagger\exp(-i\frac\alpha2 \vec{P}')(C_1\otimes \cdots \otimes C_n).$$
To synthesise an exponentiated Pauli it then suffices to show how to construct an exponentiated Pauli containing only Z's using GCZ gates and single-qubit gates.
Let $S$ be the set of qubits on which $\vec{P}'$ has a Z. Then, up to global phase, $\vec{P}'(\alpha)$ acts by applying a phase $e^{i\alpha}$ iff the parity of the qubits in the set $S$ is odd, and acts as the identity otherwise. 
This type of diagonal gate was dubbed a \emph{phase gadget} in \cite{kissinger2019tcount}, and they play an important role in the theory of \emph{phase polynomials}~\cite{amy2014polynomial}.

There are various ways to realise phase gadgets with common gate sets (see for instance \cite{phaseGadgetSynth}). These constructions work by building a `ladder of CNOTs' to achieve the appropriate parity on a qubit, applying a phase gate on this parity, and then constructing the opposite ladder to bring the qubits back to their original state. Our construction also consists of a ladder, but a different one than the one commonly used; see Figure~\ref{fig:phase-gadget-ladder}.

\begin{figure}[!htb]
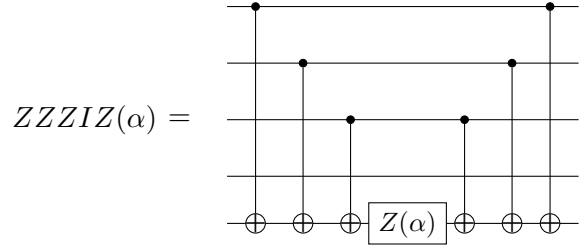

	\centering
	\ctikzfig{phase-gadget-ladder}
	\caption{Circuit realisation of the exponentiated Pauli $ZZZIZ(\alpha)$. Note that none of the gates interact with the 4th qubit as expected (as the Pauli acts trivially there).}
	\label{fig:phase-gadget-ladder}
\end{figure}

Note that while a phase gadget is symmetric in the qubits it acts on, the circuit realisation is not, and thus there is some freedom in choosing how to construct the CNOT ladders. In particular, we can choose which qubit is the target of the ladder. For the results of this paper this won't be important, and we pick the qubit arbitrarily, but it might be useful in NISQ architectures where certain qubits might have better interaction properties.

Such a ladder of CNOT gates is called a \emph{fan-in gate} in~\cite{maslov2018use}, and they showed that it can be constructed using 2 GMS gates. In~\cite{yung2014transistor} it was realised that in fact the complete phase gadget can be synthesised using just 2 GMS gates.
For our purposes it will however be sufficient to construct fan-in gates up to a collection of 'garbage' Clifford gates, which turns out to be possible using just a single GMS gate (or equivalently, a single \GCZ gate).

For instance, if we wish to construct the exponentiated Pauli gate of Figure~\ref{fig:phase-gadget-ladder} using a \GCZ gate, we can do the following:
\begin{equation}\label{eq:phase-gadget-GCZ}
\tikzfig{phase-gadget-GCZ}
\end{equation}

Here the connected black dots represent CZ gates. The second equality uses the fact that CZ gates commute past one another and that conjugating one of the qubits of a CZ gate with Hadamards transforms the CZ gate into a CNOT. The last equality follows by introducing pairs of canceling CNOT gates. The blue dotted wire represents the dividing line between the implementation of the exponentiated Pauli and the `garbage' Clifford gates.

In general, to construct a phase gadget which acts non-trivially on the qubit set $S$, we simply apply a $\GCZ_S$ gate, conjugated on one (arbitrarily chosen) qubit by Hadamard gates, and then followed on the same qubit by a $Z(\alpha)$ gate, where $\alpha$ matches the phase of the phase gadget.

We now have all the necessary ingredients for our construction; see Algorithm~\ref{alg:general-circuit}.

\begin{algorithm}
\caption{Transforming a general circuit into one using targeted global gates.}
\label{alg:general-circuit}
\begin{enumerate}
\item Given a Clifford+Phases circuit, take the earliest non-Clifford phase and push it the beginning of the circuit, transforming it into an exponentiated Pauli gate, as described in Section~\ref{sec:expPauli}.

\item Push out local Clifford gates so that the Pauli string only contains $Z$'s, making it a phase gadget.

\item Synthesise the phase gadget up to some `garbage' Clifford gates using a single GCZ gate as described above.

\item If the remaining circuit contains no non-Clifford phase gates, go to the next step. Otherwise, take the earliest non-Clifford phase and push it to the left past the earlier (necessarily Clifford) gates in the circuit and the `garbage' gates arising from the synthesis of the previous phase gadgets. Repeat steps 2 and 3 for the resulting exponentiated Pauli.
\item The remainder of the circuit does not contain any non-Clifford phase gates, and hence is a Clifford circuit. We can synthesise this circuit using the methods described in Section~\ref{sec:clifford}.
\end{enumerate}
\end{algorithm}
Note that each iteration of steps 2-4 consumes a non-Clifford phase gate, and hence this algorithm indeed terminates.

Combining this algorithm with the synthesis of Clifford circuits described in the previous section we get the following result:
\begin{theorem}\label{thm:circuit-with-GMS}
	A $n$-qubit circuit consisting of Clifford gates and $N$ non-Clifford Z-phase gates can be implemented using single qubit unitaries and at most $N+6n-8$ GMS gates.
\end{theorem}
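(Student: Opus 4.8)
The plan is to read the bound directly off Algorithm~\ref{alg:general-circuit}, treating the proof as a book-keeping exercise on the two kinds of output the algorithm produces: the single-GCZ synthesis of each phase gadget, and the one-shot synthesis of the leftover Clifford circuit. First I would observe that every pass through steps 2--3 consumes exactly one non-Clifford phase and, by the phase-gadget construction described just before the algorithm, emits exactly one \GCZ gate (conjugated by a single pair of Hadamards on one qubit and trailed by a local $Z(\alpha)$ rotation). Since a \GCZ gate is equivalent up to local Clifford gates to a $\GMS(\tfrac\pi2)$ gate, this is one \GMS gate per phase gadget, for $N$ \GMS gates in total across all $N$ non-Clifford phases.

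Next I would handle the residual Clifford circuit. After the loop of steps 2--4 terminates (it does, because each iteration destroys one non-Clifford phase), step 5 leaves a pure Clifford unitary on $n$ qubits, which by Theorem~\ref{thm:synth-clifford} can be implemented with local Cliffords and at most $6n-8$ \GMS gates. Adding the two contributions gives the claimed $N + 6n - 8$.

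The one point that needs genuine care---and where I expect the only real obstacle---is making sure the `garbage' Clifford gates left behind by each phase-gadget synthesis do not inflate the count. I would argue that these garbage gates are harmless: when the next non-Clifford phase is pushed leftward in step 4, it is conjugated past the garbage Cliffords exactly as it is past any other Clifford gate, using the property $C^\dagger Z_i C = \vec P$ from Section~\ref{sec:expPauli}; this still turns it into an exponentiated Pauli and hence, after the local-Clifford reduction of step 2, into a single-GCZ phase gadget. Thus each phase costs exactly one \GMS gate regardless of how much garbage has accumulated, and the accumulated garbage is simply swept into the final Clifford circuit of step 5, where it is absorbed into the single $6n-8$ budget rather than counted separately. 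Once this is established the theorem follows by addition, with no delicate estimate required beyond confirming that the per-phase cost stays at one and that the Clifford tail is synthesised only once.
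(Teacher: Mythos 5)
Your proposal is correct and follows essentially the same route as the paper: count one \GMS gate per non-Clifford phase via the single-GCZ phase-gadget synthesis of Algorithm~\ref{alg:general-circuit}, then invoke Theorem~\ref{thm:synth-clifford} for the residual Clifford tail, giving $N+6n-8$. Your extra care about the garbage Clifford gates being absorbed into the final Clifford circuit rather than adding to the count is exactly the point the paper handles implicitly in step 4 of the algorithm.
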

\begin{proof}
	Algorithm~\ref{alg:general-circuit} shows how the non-Clifford portion of the circuit can be implemented using $N$ GMS gates. The remaining Clifford circuit then requires an additional $6n-8$ GMS gates by Theorem~\ref{thm:synth-clifford}.
\end{proof}

\new{
For many circuits of interest, the number of non-Clifford phases will be significantly lower than the number of CNOT gates required, see for instance the optimised benchmark circuits of~\cite{nam2018automated} or the quantum chemistry circuits of~\cite{cowtan_generic_2020}, and hence this result shows there is a benefit to using targeted GMS gates over CNOT gates.
For certain specific circuit constructions the advantage can however be less clear cut.
For instance, in~\cite{heDecompositionsNqubitToffoli2017} a construction for the $n$-controlled Toffoli gate using just one dirty ancilla requires $24n-72$ CNOT gates, but $32n-96$ T gates, which would mean that our algorithm would give no benefit for this construction (although, this might change if T-count optimisation algorithms are applied prior to synthesis). 
Using the Toffoli construction of~\cite{maslov2016advantages} which requires more ancillae uses just $6n-12$ CNOT gates and $8n-17$ T gates, which brings the counts closer. 
Using handcrafted circuits, in~\cite{maslov2018use} they find a construction of an $n$-controlled Toffoli using GMS gates which requires just $3n-6$ GMS gates (although requiring the use of varying interaction strengths in the GMS gate), so it is still possible to get a benefit using GMS gates even for constructing Toffoli gates.
}

\section{Constructing circuits from untargeted global gates}\label{sec:untargeted}

In some devices it might not be possible to target `global' gates to just a subset of the total number of qubits, and instead the global gate must always be applied to \emph{all} qubits at once.

In \cite{maslov2018use} it was shown how to construct an $(n-1)$-qubit $\GMS(\alpha)$ gate using two copies of $n$-qubit $\GMS(\frac\alpha2)$ gates. By iterating the procedure, a $(n-k)$-qubit $\GMS(\alpha)$ gate can be constructed by using $2^k$ $\GMS(\frac{\alpha}{2^k})$ gates. This exponential increase in the cost when making the gate more local is obviously undesirable. In this section we describe a modification to their construction that scales linearly as the gate becomes more local.

Instead of working with GMS gates, we will work with the (local-Clifford equivalent) GZZ gates, as these are diagonal and correspond to phase polynomials that are more convenient to work with. In order to make the calculations legible we will use ZX-diagrams~\cite{CD1,CD2}. No previous understanding on the ZX-calculus will be assumed, but we direct the interested reader to the book~\cite{CKbook} or the review~\cite{van_de_wetering_zx-calculus_2020}.

First let us recall and prove the construction used in \cite{maslov2018use} to make a targeted global gate out of a pair of untargeted ones, but now presented using ZX-diagrams. Recall that a $\GZZ_S(\alpha)$ gate consists of $ZZ(\alpha)$ gates applied to every pair of qubits in $S$.
A $ZZ(\alpha)$ gate, i.e.~a phase gadget, can be represented in the ZX-calculus in the following way~\cite{kissinger2017MBQC}:
\[ZZ(\alpha) \ \ = \ \ \tikzfig{phase-gadget}\]

Note that here the orientation of the middle nodes is irrelevant:
\ctikzfig{phase-gadget-orientation}
The commutativity of GZZ gates can be expressed as follows:
\ctikzfig{phase-gadget-commute}
Those not familiar with ZX-diagrams can simply view the middle diagram where the dots are `fused' as a shorthand for the commutativity of these gates.

In this section we will only need the following identities that hold for the $ZZ(\alpha)$ gate:
\begin{equation}\label{eq:phase-gadget-identities}
\tikzfig{phase-gadget-identities}
\end{equation}
Here the red $\pi$ node is the Pauli X (i.e.~a NOT gate).

Using these rewrite rules we can reproduce the calculation in~\cite{maslov2018use} to show how to construct a targeted GZZ gate from a pair of untargeted ones (which we here present using 3 qubits); see Figure~\ref{fig:global-to-local}.

\begin{figure}[!htb]
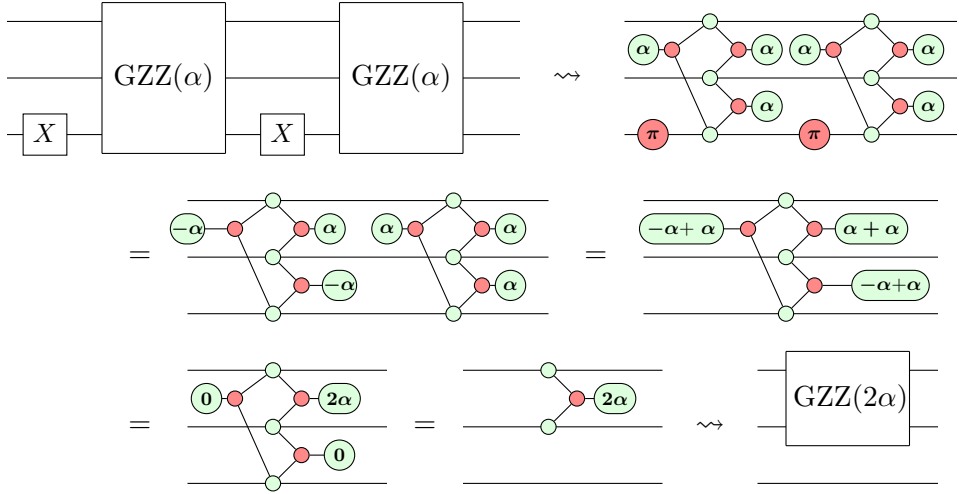

 \ctikzfig{global-to-local}
 \vspace{-0.2cm}
 \caption{Untargeted to targeted reduction from~\cite{maslov2018use} for 3 qubits.}
 \label{fig:global-to-local}
 \vspace{0.5cm}
\end{figure}


The first reduction step in this figure is just representing the circuit as a ZX-diagram, and all the other equalities use one of the identities in~\eqref{eq:phase-gadget-identities} together with the fact that $ZZ(\alpha)$ gates all commute with one another. The procedure for more qubits than 3 is entirely analogous.

To make GMS gates on lower number of qubits in~\cite{maslov2018use}, this procedure is simply repeated, removing one qubit line and doubling the number of GMS gates needed at every step. 
We will however adopt a modification of the procedure where we put $X$ gates on more than just one qubit at the same time. Let us demonstrate this with the simplest example on 4 qubits; see Figure~\ref{fig:4-to-2}.

\begin{figure}[!htb]
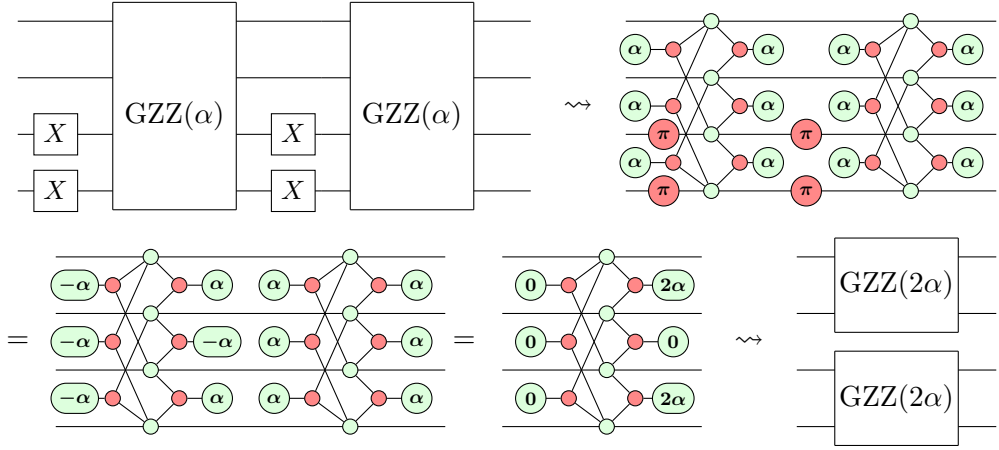

\ctikzfig{4-to-2}
\vspace{-0.2cm}
\caption{Untargeted to targeted reduction from 4 qubits to 2 qubits.}
\label{fig:4-to-2}
\end{figure}

The first step is again a translation from the circuit into a ZX-diagram. In the first equality note that for the $ZZ(\alpha)$ gate on the third and fourth qubit there are red $\pi$'s on both of its qubits and hence the phase is not flipped.
At the second equality we combine the gates, adding their phases together, resulting in the configuration of GZZ gates displayed.

This generalises in a straightforward way to a situation with more qubits and $X$ gates:
\begin{proposition}\label{prop:gzz-splits}
For any set of qubits $Q$ and subset of qubits $S\subseteq Q$ we have
$$\GZZ_Q(\alpha) \circ X_S \circ \GZZ_Q(\alpha)\circ X_S = \GZZ_{Q\backslash S}(2\alpha) \circ \GZZ_{S}(2\alpha),$$ 
where $X_S$ denotes the application of an $X$ gate on each of the qubits in the set $S$. In words: applying a pair of $\GZZ(\alpha)$ gates on a set $Q$ with NOT gates in between them on some of the qubits $S$ is equal to applying a $\GZZ(2\alpha)$ gate on each of the sets $Q\backslash S$ and $S$.
\end{proposition}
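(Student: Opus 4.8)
The plan is to reduce everything to a pair-by-pair analysis, exploiting that every $ZZ(\alpha)$ gate is diagonal so that all the factors commute, and that $X_S$ is an involution. First I would expand both global gates using the definition $\GZZ_Q(\alpha) = \prod_{i<j\in Q} ZZ_{i,j}(\alpha)$. Since each $X$ gate is self-inverse, $X_S X_S = I$, I can insert copies of $X_S X_S$ between the factors of the inner $\GZZ_Q(\alpha)$ and distribute the conjugation across the product, obtaining $X_S \circ \GZZ_Q(\alpha) \circ X_S = \prod_{i<j\in Q} \bigl(X_S\, ZZ_{i,j}(\alpha)\, X_S\bigr)$.

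Next I would invoke the sign-flip identity from equation~\eqref{eq:phase-gadget-identities}: conjugating a $ZZ(\alpha)$ gate by a single $X$ flips the sign of the phase, while conjugating by $X$ on both of its qubits leaves the phase unchanged, since the two sign flips compose to the identity (this is the observation already used in Figure~\ref{fig:4-to-2}). Classifying each pair $\{i,j\}\subseteq Q$ by how many of its endpoints lie in $S$ then yields three cases: if both endpoints lie in $S$, or both lie in $Q\setminus S$, the conjugated factor is the unchanged $ZZ_{i,j}(\alpha)$; if exactly one endpoint lies in $S$ (the ``cross'' pairs), the factor becomes $ZZ_{i,j}(-\alpha)$.

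Finally I would multiply this conjugated product against the remaining outer factor $\GZZ_Q(\alpha) = \prod_{i<j\in Q} ZZ_{i,j}(\alpha)$, again using commutativity of the diagonal gates to pair up the two contributions associated with each $\{i,j\}$. For pairs internal to $S$ or internal to $Q\setminus S$ the phases add, giving $ZZ_{i,j}(2\alpha)$; for the cross pairs the phases cancel, $ZZ_{i,j}(\alpha)\circ ZZ_{i,j}(-\alpha) = ZZ_{i,j}(0) = I$. The surviving factors are therefore exactly the pairs internal to $S$ and the pairs internal to $Q\setminus S$, which reassemble into $\GZZ_{Q\setminus S}(2\alpha)\circ \GZZ_S(2\alpha)$, as claimed.

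I do not expect a serious obstacle here: the entire content is the cancellation of the cross pairs, and the only real care needed is the bookkeeping of the three classes of pairs together with the precise application of the sign-flip identity. If there is any subtle point, it is justifying that the conjugation distributes over the product, but this follows immediately from the involutivity of $X_S$, and everything else is a routine consequence of the commutativity of the diagonal $ZZ$ gates. This argument is simply the algebraic counterpart of the diagrammatic calculation illustrated for four qubits in Figure~\ref{fig:4-to-2}.
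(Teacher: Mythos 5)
Your proof is correct and follows essentially the same route as the paper: the paper states this proposition without a formal proof, relying on the sign-flip and phase-addition identities of~\eqref{eq:phase-gadget-identities} together with the worked examples of Figures~\ref{fig:global-to-local} and~\ref{fig:4-to-2}, and your pair-by-pair case analysis (both endpoints in $S$ or in $Q\setminus S$ versus exactly one) is precisely the algebraic counterpart of that diagrammatic argument. In fact your write-up supplies the general-$n$ bookkeeping that the paper only asserts ``generalises in a straightforward way,'' so it is a faithful and complete version of the intended proof.
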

Note that for the purposes of this proposition, if $\lvert S\rvert = 1$ or $\lvert Q\backslash S\rvert = 1$, i.e.~if one of the sets of qubits contains just a single qubit, then the $\GZZ_S(2\alpha)$ gate, respectively $\GZZ_{Q\backslash S}(2\alpha)$ gate is just the identity (as is the case in the derivation of Figure~\ref{fig:global-to-local}).

Suppose the number of qubits is $2^n$. By taking $\lvert S \rvert = 2^{n-1}$ we get two parallel GZZ gates each acting on $2^{n-1}$ gates. If we start with four GZZ gates and apply this procedure to each pair we end with two parallel sets of GZZ gates, so that if we have the appropriate set of $X$'s in between them we can reduce these to four parallel GZZ gates each acting on $2^{n-2}$ gates. Applying this procedure $n-1$ times, which requires $2^{n-1}$ $\GZZ(\alpha)$ gates at the start, we end up with parallel $\GZZ(2^{n-1}\alpha)$ gates each acting on $2$ qubits each.

Now since conjugating a 2-qubit GZZ gate with a single pair of $X$ gates flips the phase of the GZZ gate, we can double the procedure one more time to cancel the 2-qubit GZZ gates we don't need by strategically inserting $X$ gates. Hence, by using $2^n$ $\GZZ(\alpha)$ gates acting on all $2^n$ qubits, we can implement any parallel set of $\GZZ(2^n\alpha)$ gates acting on 2 qubits. 

For example, in Figure~\ref{fig:8-to-2} we demonstrate the case where we have $n=3$, so that we have $8$ qubits, and we place the X gates so that we implement 2-qubit gates on the first and last pairs of qubits.
If we had used the procedure of~\cite{maslov2018use} to synthesise the same pair of 2-qubit GZZ gates, then each of the 2-qubit gates would have to be synthesised separately, and would require $2^{8-2} = 64$ \GZZ gates each. The benefits of this new procedure become even more pronounced as the number of qubits increases.
\begin{figure}
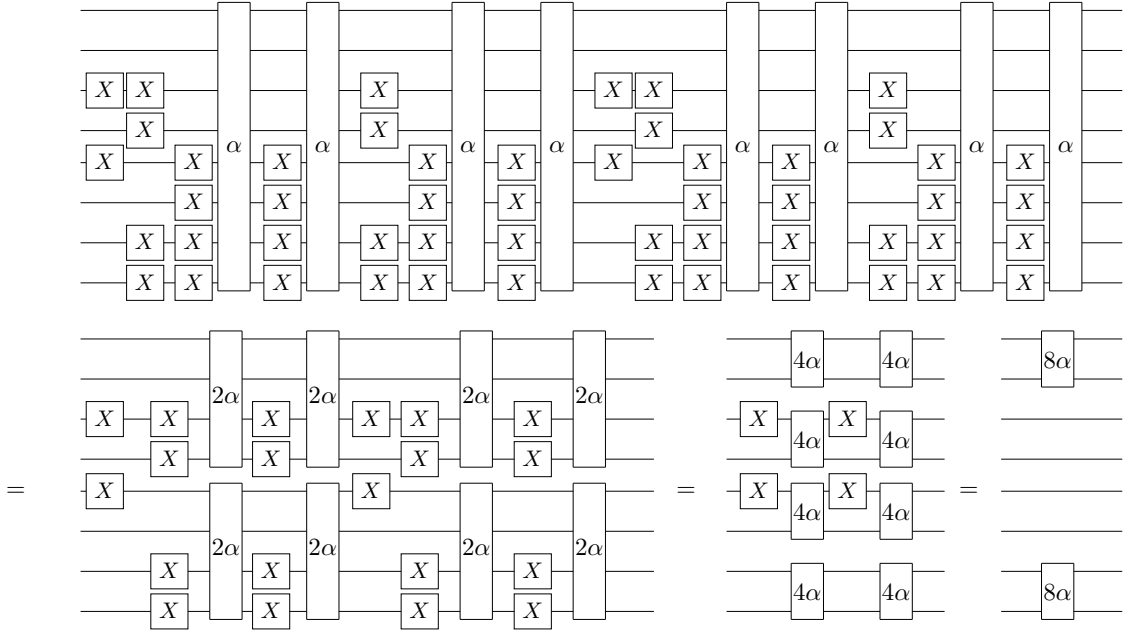

	\centering
	\scalebox{0.85}{\tikzfig{8-to-2}}
	\caption{Example of our procedure for constructing targeted GMS gates from untargeted ones. Here, the long boxes labelled by a phase represent $\GZZ$ gates. Each diagram in the equation is produced by applying Proposition~\ref{prop:gzz-splits} to pairs of GZZ gates.}
	\label{fig:8-to-2}
\end{figure}

If the number of qubits $n$ is not an exact power of $2$, the procedure still works, but requires a slightly different number of $n$-qubit GZZ gates. If $n=2^k + 1$, we require $2^k$ GZZ gates: the first pair splits into a $2^{k-1}$ qubit GZZ gate and a $2^{k-1}+1$ GZZ gate. We continue this splitting until we end up with a series of $2$-qubit GZZ gates and a 3-qubit GZZ gates. With one more doubling of the gate numbers we can cancel any of the $2$-qubit GZZ gates we want, and reduce the $3$-qubit GZZ gate to a 2-qubit one of our choice.

If instead $n=2^k + r$ for some $1<r < 2^k$, then we require $2^{k+1}$ $n$-qubit GZZ gates to implement an arbitrary collection of parallel GZZ gates. Note that $2^{k+1} < 2n$ so that regardless of the number of qubits $n$, we can implement the desired targeted gates using less than $2n$ untargeted gates. In particular, we have the following:
\begin{proposition}
	An $n$-qubit CNOT circuit of depth $d$ can be synthesised with local Clifford gates and at most $d2^k$ untargeted $n$-qubit $\GZZ(\frac{\pi}{2^{k+1}})$ gates where $k\in\mathbb{N}$ is the smallest number such that $n-1\leq 2^k$.
\end{proposition}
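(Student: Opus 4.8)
The plan is to process the CNOT circuit one layer at a time and reduce each layer to a single invocation of the untargeted-to-targeted construction built from Proposition~\ref{prop:gzz-splits}. First I would decompose the depth-$d$ CNOT circuit into $d$ layers, each consisting of CNOT gates acting on pairwise-disjoint qubits. Within a layer, every CNOT can be turned into a CZ gate by conjugating its target with Hadamards, and each resulting CZ gate is a two-qubit $ZZ(\tfrac{\pi}{2})$ gate up to the local Cliffords $S^\dagger_i S^\dagger_j$ recalled in Section~\ref{sec:preliminaries}. Since the CNOTs in one layer act on disjoint qubits, this exhibits the layer, up to local Clifford gates, as an arbitrary parallel collection of two-qubit $\GZZ(\tfrac{\pi}{2})$ gates --- exactly the object the construction preceding the proposition is designed to synthesise.

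Next I would apply that construction to each layer. The discussion following Proposition~\ref{prop:gzz-splits} shows that an arbitrary parallel set of two-qubit GZZ gates on $n$ qubits can be realised (up to the local Clifford $X$ gates inserted between the global gates) from $2^k$ untargeted $n$-qubit $\GZZ(\alpha)$ gates, where $k$ is the smallest integer with $n-1\le 2^k$, after $k$ rounds of phase-doubling splits. To make every surviving two-qubit gate carry the phase $\tfrac{\pi}{2}$ needed for a CZ, I would set the base phase to $\alpha=\tfrac{\pi}{2^{k+1}}$, so that $2^k\alpha=\tfrac{\pi}{2}$. Each layer then costs $2^k$ untargeted $\GZZ(\tfrac{\pi}{2^{k+1}})$ gates, and summing over the $d$ layers --- the intervening $X$ gates together with the CNOT-to-CZ and CZ-to-$ZZ$ Cliffords all being local --- yields the claimed bound of $d2^k$.

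The step I expect to require the most care is checking that the construction really outputs every selected two-qubit gate at the \emph{same} phase $\tfrac{\pi}{2}$, and does so within the budget of $2^k$ global gates no matter which disjoint pairs the layer picks out. Because the splitting tree can reach some pairs sooner than others, shallow branches must be padded by continuing to double their phase (choosing $S$ to contain both or neither qubit of a pair, which by Proposition~\ref{prop:gzz-splits} leaves the phase unflipped) while unwanted pairs are cancelled (choosing $S$ to contain exactly one qubit, flipping and annihilating the phase). Verifying that this equalisation can always be arranged so that all $\lfloor n/2\rfloor$ potential target pairs sit at depth $k$ --- using $\lfloor n/2\rfloor\le 2^{k-1}$, which follows from $n\le 2^k+1$ --- is the crux; fortunately it is precisely the content already established in the paragraphs preceding the statement, so the proof proper only needs to cite it and carry out the per-layer bookkeeping.
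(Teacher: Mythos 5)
Your proposal is correct and follows essentially the same route as the paper, which states this proposition without a separate proof precisely because it is the layer-by-layer application of the untargeted-to-targeted construction developed in the preceding paragraphs: each depth layer of disjoint CNOTs becomes, up to local Cliffords, a parallel set of two-qubit $\GZZ(\frac\pi2)$ gates, realisable with $2^k$ untargeted $\GZZ(\frac{\pi}{2^{k+1}})$ gates. Your choice of base phase $\alpha=\frac{\pi}{2^{k+1}}$ so that $2^k\alpha=\frac\pi2$, and your attention to equalising the phase across all surviving pairs, match the paper's intended bookkeeping exactly.
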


As is shown in~\cite{jiang2020optimal}, the asymptotically optimal depth of ancilla-free CNOT circuits on $n$ qubits is $O(n/\log(n))$. Our result then implies that we can implement an arbitrary $n$-qubit CNOT circuit using $O(n^2/\log(n))$ $n$-qubit GZZ gates. Interestingly, this is equal to the asymptotically optimal implementation of CNOT circuits using CNOT gates presented in~\cite{markov2008optimal}. 

As any Clifford circuit can be written using local Cliffords and at most three layers of CNOT circuits we get the following result.
\begin{proposition}\label{prop:clifford-untargeted}
	An $n$-qubit Clifford circuit can be implemented using local Clifford gates and $O(n^2/\log(n))$ untargeted $n$-qubit $\GZZ(\frac{\pi}{2^{k+1}})$ gates, where $k$ is the smallest number such that $n-1\leq 2^k$.
\end{proposition}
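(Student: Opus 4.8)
The plan is to treat this as a near-corollary of the preceding proposition by reducing an arbitrary Clifford circuit to a bounded number of CNOT (linear reversible) circuits. Concretely, I would start from the normal form alluded to just above the statement, in which any $n$-qubit Clifford unitary is written as a product of local Clifford layers interleaved with at most three CNOT circuits $G_1, G_2, G_3$. Since local Clifford gates are free in our cost model, it suffices to bound the untargeted-$\GZZ$ cost of each $G_i$ separately and add the three contributions.

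For a single CNOT circuit $G_i$, I would invoke the optimal-depth synthesis of linear reversible circuits: by \cite{jiang2020optimal}, every $n$-qubit CNOT circuit can be realised without ancillae in depth $d = O(n/\log n)$. Feeding this depth into the preceding proposition, $G_i$ can be implemented with local Cliffords and at most $d\,2^k$ untargeted $n$-qubit $\GZZ(\tfrac{\pi}{2^{k+1}})$ gates, where $k$ is the least integer with $n-1 \le 2^k$. The defining property of $k$ gives $2^{k-1} < n-1 \le 2^k$, hence $2^k < 2(n-1) = O(n)$, so each $G_i$ costs $d\,2^k = O(n/\log n)\cdot O(n) = O(n^2/\log n)$ untargeted gates, all at the single uniform angle $\tfrac{\pi}{2^{k+1}}$ fixed by $n$.

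Summing over the three CNOT circuits only multiplies the count by a constant, leaving the total at $O(n^2/\log n)$ untargeted $\GZZ(\tfrac{\pi}{2^{k+1}})$ gates, which is the claimed bound. The main work here is not the arithmetic but confirming the structural premise: that a Clifford can indeed be expressed with only a constant number of CNOT-circuit layers, with all remaining structure (the phase and CZ couplings) absorbed into the interleaved local Clifford layers rather than left as standalone CZ circuits. This is the genuine obstacle, since a naive treatment that keeps a separate CZ circuit would cost $O(n)$ parallel matchings of $O(n)$ untargeted gates each, i.e.\ $O(n^2)$, which is strictly worse than the target. The sharpening to $O(n^2/\log n)$ relies precisely on packaging everything into linear reversible circuits, so that the near-optimal $O(n/\log n)$ depth bound of \cite{jiang2020optimal}, matching the CNOT-count lower bound of \cite{markov2008optimal}, is what drives the final estimate.
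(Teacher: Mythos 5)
Your proposal matches the paper's argument essentially verbatim: decompose the Clifford into local Cliffords and at most three CNOT circuits, apply the $O(n/\log n)$ depth bound of \cite{jiang2020optimal} to each, feed the depth into the preceding proposition to get $d\,2^k = O(n^2/\log n)$ untargeted $\GZZ(\frac{\pi}{2^{k+1}})$ gates per layer, and sum over the constant number of layers. The structural premise you rightly flag as the genuine obstacle --- that the CZ/phase couplings can be absorbed so that only CNOT-circuit layers remain, rather than standalone CZ layers costing $O(n^2)$ --- is exactly the assertion the paper itself states without proof or citation, so your treatment is no less complete than the original.
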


In Section~\ref{sec:constructingCircuit} we saw how to construct an arbitrary circuit using targeted $\GCZ$ gates. This worked by implementing each non-Clifford phase gate as an exponentiated Pauli gate. While the implementation of such a gate actually requires 2 $\GCZ$ gates, by pushing some leftover gates into the remainder of the circuit we constructed an algorithm using just a single $\GCZ$ gate per non-Clifford phase. We can modify this algorithm to only use untargeted $\GZZ$ gates.

Of the five steps of Algorithm~\ref{alg:general-circuit} we only need to modify the third step: instead of using a $\GCZ_S$ gate, we use a pair of untargeted $\GZZ(\frac\pi4)$ gates with NOT gates on the qubits in $S$ as described in Proposition~\ref{prop:gzz-splits} to get $\GZZ_S(\frac\pi2)\circ \GZZ_{Q\backslash S}(\frac\pi2)$. Note that the first of these gates $\GZZ_S(\frac\pi2)$ is the desired $\GCZ_S$ gate up to local Clifford gates. The other gate $\GZZ_{Q\backslash S}(\frac\pi2)$ is Clifford and affects a disjoint set of qubits, so that it can be pushed onto the remainder of the circuit, as in step 4 of Algorithm~\ref{alg:general-circuit}. 
For example, we can modify the calculation \eqref{eq:phase-gadget-GCZ} as follows:
\begin{equation}\label{eq:phase-gadget-GCZ-global}
\tikzfig{phase-gadget-GCZ-global}
\end{equation}
Note that in this particular case, the $\GZZ_{Q\backslash S}(\frac\pi2)$ acts on just the fourth qubit, so that it is equal to the identity.

We now have the following variation on Theorem~\ref{thm:circuit-with-GMS}.

\begin{theorem}\label{thm:circuit-with-GMS-untargeted}
	An $n$-qubit circuit consisting of Clifford gates and $N$ non-Clifford Z-phase gates can be implemented using single qubit unitaries and at most $2N+O(n^2/\log(n))$ untargeted $\GMS$ gates.
\end{theorem}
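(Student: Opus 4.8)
The plan is to re-run Algorithm~\ref{alg:general-circuit} essentially verbatim, applying only the single modification to Step~3 described immediately above the theorem statement, and then to tally the untargeted gate count. Since $\GZZ$ gates differ from $\GMS$ gates only by local Clifford conjugation (as established in Section~\ref{sec:preliminaries}), it suffices to count untargeted $\GZZ$ gates and then translate back to $\GMS$ gates at the cost of extra single-qubit Cliffords.

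First I would address the non-Clifford portion of the circuit. Exactly as in the targeted case, each pass of Steps~2--4 consumes one non-Clifford phase gate by pushing it to the front, conjugating it into a $Z$-only phase gadget on some qubit set $S$, and synthesising that gadget. The only difference is that instead of one $\GCZ_S$ gate we now invoke Proposition~\ref{prop:gzz-splits} with $Q$ the full qubit register: a pair of untargeted $\GZZ(\tfrac\pi4)$ gates with $X$ gates on the qubits of $S$ in between yields $\GZZ_S(\tfrac\pi2)\circ\GZZ_{Q\backslash S}(\tfrac\pi2)$. The factor $\GZZ_S(\tfrac\pi2)$ is, up to local Cliffords, precisely the $\GCZ_S$ gate the targeted algorithm would have produced, so the gadget is realised correctly. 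Hence each non-Clifford phase now costs exactly two untargeted $\GZZ$ gates, giving $2N$ untargeted gates in total for this portion.

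Next I would handle the leftover factors. The second factor $\GZZ_{Q\backslash S}(\tfrac\pi2)$ is a $\GCZ$-type gate at angle $\tfrac\pi2$, hence Clifford, and it acts on the qubits disjoint from $S$. Just as the `garbage' Clifford gates were absorbed in Step~4 of the targeted algorithm, I would push every such $\GZZ_{Q\backslash S}(\tfrac\pi2)$ to the right into the accumulating trailing circuit. The crucial observation is that these extra gates are Clifford, so they neither create new non-Clifford phases nor obstruct the invariant that subsequent non-Clifford phases can be conjugated into Pauli strings and brought to the front, since conjugation by a Clifford sends Pauli strings to Pauli strings. After all $N$ phases have been consumed, what remains is a single Clifford circuit (the original trailing Cliffords together with all the accumulated $\GZZ_{Q\backslash S}(\tfrac\pi2)$ factors), which I synthesise \emph{once} using Proposition~\ref{prop:clifford-untargeted} with $O(n^2/\log(n))$ untargeted $\GZZ$ gates. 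Summing the two contributions gives $2N + O(n^2/\log(n))$ untargeted gates, and translating $\GZZ$ back to $\GMS$ costs only local Cliffords.

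The main obstacle is the bookkeeping around the auxiliary $\GZZ_{Q\backslash S}(\tfrac\pi2)$ gates rather than any new calculation. I must confirm that they are all genuinely Clifford, so that the Clifford overhead remains a \emph{single} additive $O(n^2/\log(n))$ term instead of scaling with $N$, and that repeatedly absorbing them into the remainder does not spoil the ability to keep pushing later non-Clifford phases to the front. Both facts follow from $\tfrac\pi2$-angle $\GZZ$ gates being Clifford and from the observation that the push-through step of Algorithm~\ref{alg:general-circuit} only ever requires conjugating a phase past Clifford gates, which is exactly the situation here.
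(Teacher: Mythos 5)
Your proposal is correct and follows essentially the same route as the paper: modify Step~3 of Algorithm~\ref{alg:general-circuit} to use a pair of untargeted $\GZZ(\tfrac\pi4)$ gates via Proposition~\ref{prop:gzz-splits}, push the Clifford factor $\GZZ_{Q\backslash S}(\tfrac\pi2)$ into the trailing circuit, and invoke Proposition~\ref{prop:clifford-untargeted} once for the remaining Clifford part. The paper's own proof is just a one-line reference to exactly this discussion, so your more explicit bookkeeping of why the auxiliary gates stay Clifford and do not inflate the count is a faithful (indeed more detailed) rendering of the intended argument.
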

\begin{proof}
	Combine the previous discussion with Proposition~\ref{prop:clifford-untargeted}.
\end{proof}

\section{Conclusion}\label{sec:conclusion}

We studied how to transform arbitrary circuits given by single-qubit and two-qubit unitaries into circuits using global multi-qubit gates. 
We found a more efficient construction for Clifford unitaries than the previous best, and we found an algorithm that constructs circuits containing a number of GMS gates proportional to the number of non-Clifford gates in the input circuit, both in the targeted setting where the GMS gate can be applied to any subset of qubits, and the untargeted setting where the GMS gates can only be applied to all qubits at once.
To the author's knowledge these are the first results improving upon the naive algorithm for constructing GMS circuits for \emph{arbitrary} input circuits.

The constructions of~\cite{maslov2018use,ivanov2015efficient,groenland2020signal} that build GMS circuits for specific families of unitaries are better than the general bounds we find here for those specific unitaries. For future work it might therefore be interesting to see how much their results can be adapted to the general case.

Another interesting avenue for future work is to allow GMS gates where the interaction strength varies on each qubit pair. This could model systems where the physical interaction strength lies within an error range of some desired value. The procedure for constructing targeted GMS gates from untargeted ones could potentially be modified so that the error in the interaction strength is suppressed.

\medskip
\noindent \textbf{Acknowledgments}: The author is supported by \censor{a Rubicon fellowship} financed by the \censor{Dutch} \censor{Research Council (NWO)}. The author would like to thank \censor{Aleks Kissinger} for bringing this topic to his attention and for giving valuable suggestions regarding the presentation of the paper.

\bibliographystyle{plainnat}
\bibliography{main}

\begin{thebibliography}{34}
\providecommand{\natexlab}[1]{#1}
\providecommand{\url}[1]{\texttt{#1}}
\expandafter\ifx\csname urlstyle\endcsname\relax
  \providecommand{\doi}[1]{doi: #1}\else
  \providecommand{\doi}{doi: \begingroup \urlstyle{rm}\Url}\fi

\bibitem[Aaronson and Gottesman(2004)]{aaronsongottesman2004}
Scott Aaronson and Daniel Gottesman.
\newblock Improved simulation of stabilizer circuits.
\newblock \emph{Physical Review A}, 70\penalty0 (5):\penalty0 052328, 2004.
\newblock \doi{10.1103/PhysRevA.70.052328}.

\bibitem[Amy and Mosca(2019)]{amy2016t}
Matthew Amy and Michele Mosca.
\newblock {T-count optimization and Reed-Muller codes}.
\newblock \emph{Transactions on Information Theory}, 2019.
\newblock \doi{10.1109/TIT.2019.2906374}.

\bibitem[Amy et~al.(2014)Amy, Maslov, and Mosca]{amy2014polynomial}
Matthew Amy, Dmitri Maslov, and Michele Mosca.
\newblock {Polynomial-time T-depth optimization of Clifford+ T circuits via
  matroid partitioning}.
\newblock \emph{IEEE Transactions on Computer-Aided Design of Integrated
  Circuits and Systems}, 33\penalty0 (10):\penalty0 1476--1489, 2014.
\newblock \doi{10.1109/TCAD.2014.2341953}.

\bibitem[Arute et~al.(2019)Arute, Arya, Babbush, Bacon, Bardin, Barends,
  Biswas, Boixo, Brandao, Buell, et~al.]{arute2019quantum}
Frank Arute, Kunal Arya, Ryan Babbush, Dave Bacon, Joseph~C Bardin, Rami
  Barends, Rupak Biswas, Sergio Boixo, Fernando~GSL Brandao, David~A Buell,
  et~al.
\newblock Quantum supremacy using a programmable superconducting processor.
\newblock \emph{Nature}, 574\penalty0 (7779):\penalty0 505--510, 2019.
\newblock \doi{10.1038/s41586-019-1666-5}.

\bibitem[Bruzewicz et~al.(2019)Bruzewicz, Chiaverini, McConnell, and
  Sage]{bruzewicz2019trapped}
Colin~D Bruzewicz, John Chiaverini, Robert McConnell, and Jeremy~M Sage.
\newblock Trapped-ion quantum computing: Progress and challenges.
\newblock \emph{Applied Physics Reviews}, 6\penalty0 (2):\penalty0 021314,
  2019.
\newblock \doi{10.1063/1.5088164}.

\bibitem[Chapman et~al.(2016)Chapman, Santandrea, Huang, Corrielli, Crespi,
  Yung, Osellame, and Peruzzo]{chapman2016experimental}
Robert~J Chapman, Matteo Santandrea, Zixin Huang, Giacomo Corrielli, Andrea
  Crespi, Man-Hong Yung, Roberto Osellame, and Alberto Peruzzo.
\newblock Experimental perfect state transfer of an entangled photonic qubit.
\newblock \emph{Nature communications}, 7:\penalty0 11339, 2016.
\newblock \doi{10.1038/ncomms11339}.

\bibitem[Coecke and Duncan(2008)]{CD1}
Bob Coecke and Ross Duncan.
\newblock Interacting quantum observables.
\newblock In \emph{Proceedings of the 37th International Colloquium on
  Automata, Languages and Programming (ICALP)}, Lecture Notes in Computer
  Science, 2008.
\newblock \doi{10.1007/978-3-540-70583-3_25}.

\bibitem[Coecke and Duncan(2011)]{CD2}
Bob Coecke and Ross Duncan.
\newblock Interacting quantum observables: categorical algebra and
  diagrammatics.
\newblock \emph{New Journal of Physics}, 13:\penalty0 043016, 2011.
\newblock \doi{10.1088/1367-2630/13/4/043016}.

\bibitem[Coecke and Kissinger(2017)]{CKbook}
Bob Coecke and Aleks Kissinger.
\newblock \emph{Picturing Quantum Processes}.
\newblock Cambridge University Press, 2017.
\newblock ISBN 9781107104228.
\newblock \doi{10.1017/9781316219317}.

\bibitem[Cowtan et~al.(2020{\natexlab{a}})Cowtan, Dilkes, Duncan, Simmons, and
  Sivarajah]{phaseGadgetSynth}
Alexander Cowtan, Silas Dilkes, Ross Duncan, Will Simmons, and Seyon Sivarajah.
\newblock {Phase Gadget Synthesis for Shallow Circuits}.
\newblock In Bob Coecke and Matthew Leifer, editors, \emph{Proceedings 16th
  International Conference on Quantum Physics and Logic, Chapman University,
  Orange, CA, USA., 10-14 June 2019}, volume 318 of \emph{Electronic
  Proceedings in Theoretical Computer Science}, pages 213--228. Open Publishing
  Association, 2020{\natexlab{a}}.
\newblock \doi{10.4204/EPTCS.318.13}.

\bibitem[Cowtan et~al.(2020{\natexlab{b}})Cowtan, Simmons, and
  Duncan]{cowtan_generic_2020}
Alexander Cowtan, Will Simmons, and Ross Duncan.
\newblock A {Generic} {Compilation} {Strategy} for the {Unitary} {Coupled}
  {Cluster} {Ansatz}.
\newblock \emph{arXiv:2007.10515 [quant-ph]}, 2020{\natexlab{b}}.
\newblock URL \url{http://arxiv.org/abs/2007.10515}.

\bibitem[de~Beaudrap et~al.(2020)de~Beaudrap, Bian, and
  Wang]{deBeaudrap2020Techniques}
Niel de~Beaudrap, Xiaoning Bian, and Quanlong Wang.
\newblock {Techniques to Reduce $\pi/4$-Parity-Phase Circuits, Motivated by the
  ZX Calculus}.
\newblock In Bob Coecke and Matthew Leifer, editors, \emph{Proceedings 16th
  International Conference on Quantum Physics and Logic, Chapman University,
  Orange, CA, USA., 10-14 June 2019}, volume 318 of \emph{Electronic
  Proceedings in Theoretical Computer Science}, pages 131--149. Open Publishing
  Association, 2020.
\newblock \doi{10.4204/EPTCS.318.9}.

\bibitem[Debnath et~al.(2016)Debnath, Linke, Figgatt, Landsman, Wright, and
  Monroe]{debnath2016demonstration}
Shantanu Debnath, Norbert~M Linke, Caroline Figgatt, Kevin~A Landsman, Kevin
  Wright, and Christopher Monroe.
\newblock Demonstration of a small programmable quantum computer with atomic
  qubits.
\newblock \emph{Nature}, 536\penalty0 (7614):\penalty0 63--66, 2016.
\newblock \doi{10.1038/nature18648}.

\bibitem[Duncan et~al.(2020)Duncan, Kissinger, Perdrix, and van~de
  Wetering]{cliffsimp}
Ross Duncan, Aleks Kissinger, Simon Perdrix, and John van~de Wetering.
\newblock {Graph-theoretic Simplification of Quantum Circuits with the
  ZX-calculus}.
\newblock \emph{{Quantum}}, 4:\penalty0 279, 6 2020.
\newblock ISSN 2521-327X.
\newblock \doi{10.22331/q-2020-06-04-279}.

\bibitem[Figgatt et~al.(2019)Figgatt, Ostrander, Linke, Landsman, Zhu, Maslov,
  and Monroe]{figgatt2019parallel}
Caroline Figgatt, Aaron Ostrander, Norbert~M Linke, Kevin~A Landsman, Daiwei
  Zhu, Dmitri Maslov, and Christopher Monroe.
\newblock Parallel entangling operations on a universal ion-trap quantum
  computer.
\newblock \emph{Nature}, 572\penalty0 (7769):\penalty0 368--372, 2019.
\newblock \doi{10.1038/s41586-019-1427-5}.

\bibitem[Groenland et~al.(2020)Groenland, Witteveen, Schoutens, and
  Gerritsma]{groenland2020signal}
Koen Groenland, Freek Witteveen, Kareljan Schoutens, and Rene Gerritsma.
\newblock Signal processing techniques for efficient compilation of controlled
  rotations in trapped ions.
\newblock \emph{New Journal of Physics}, 2020.
\newblock \doi{10.1088/1367-2630/ab8830}.

\bibitem[Grzesiak et~al.(2020)Grzesiak, Bl{\"u}mel, Wright, Beck, Pisenti, Li,
  Chaplin, Amini, Debnath, Chen, et~al.]{grzesiak2020efficient}
Nikodem Grzesiak, Reinhold Bl{\"u}mel, Kenneth Wright, Kristin~M Beck, Neal~C
  Pisenti, Ming Li, Vandiver Chaplin, Jason~M Amini, Shantanu Debnath, Jwo-Sy
  Chen, et~al.
\newblock Efficient arbitrary simultaneously entangling gates on a trapped-ion
  quantum computer.
\newblock \emph{Nature Communications}, 11\penalty0 (1):\penalty0 1--6, 2020.
\newblock \doi{10.1038/s41467-020-16790-9}.

\bibitem[He et~al.(2017)He, Luo, Zhang, Wang, and
  Wang]{heDecompositionsNqubitToffoli2017}
Yong He, Ming-Xing Luo, E.~Zhang, Hong-Ke Wang, and Xiao-Feng Wang.
\newblock Decompositions of n-qubit {Toffoli} {Gates} with {Linear} {Circuit}
  {Complexity}.
\newblock \emph{International Journal of Theoretical Physics}, 56\penalty0 (7),
  2017.
\newblock ISSN 1572-9575.
\newblock \doi{10.1007/s10773-017-3389-4}.
\newblock URL \url{https://doi.org/10.1007/s10773-017-3389-4}.

\bibitem[Heyfron and Campbell(2018)]{heyfron2018efficient}
Luke~E Heyfron and Earl~T Campbell.
\newblock {An efficient quantum compiler that reduces T count}.
\newblock \emph{Quantum Science and Technology}, 4\penalty0 (015004), 2018.
\newblock \doi{10.1088/2058-9565/aad604}.

\bibitem[Ivanov et~al.(2015)Ivanov, Ivanov, and Vitanov]{ivanov2015efficient}
Svetoslav~S Ivanov, Peter~A Ivanov, and Nikolay~V Vitanov.
\newblock Efficient construction of three-and four-qubit quantum gates by
  global entangling gates.
\newblock \emph{Physical Review A}, 91\penalty0 (3):\penalty0 032311, 2015.
\newblock \doi{10.1103/PhysRevA.91.032311}.

\bibitem[Jiang et~al.(2020)Jiang, Sun, Teng, Wu, Wu, and
  Zhang]{jiang2020optimal}
Jiaqing Jiang, Xiaoming Sun, Shang-Hua Teng, Bujiao Wu, Kewen Wu, and Jialin
  Zhang.
\newblock {Optimal space-depth trade-off of CNOT circuits in quantum logic
  synthesis}.
\newblock In \emph{Proceedings of the Fourteenth Annual ACM-SIAM Symposium on
  Discrete Algorithms}, pages 213--229. SIAM, 2020.
\newblock \doi{10.1137/1.9781611975994.13}.

\bibitem[Kissinger and van~de Wetering(2019)]{kissinger2017MBQC}
Aleks Kissinger and John van~de Wetering.
\newblock Universal {MBQC} with generalised parity-phase interactions and
  {P}auli measurements.
\newblock \emph{{Quantum}}, 3:\penalty0 134, 4 2019.
\newblock ISSN 2521-327X.
\newblock \doi{10.22331/q-2019-04-26-134}.

\bibitem[Kissinger and van~de Wetering(2020)]{kissinger2019tcount}
Aleks Kissinger and John van~de Wetering.
\newblock {Reducing the number of non-Clifford gates in quantum circuits}.
\newblock \emph{Physical Review A, Vol.102-2}, 102:\penalty0 022406, 8 2020.
\newblock \doi{10.1103/PhysRevA.102.022406}.

\bibitem[Litinski(2019)]{litinski2019game}
Daniel Litinski.
\newblock {A game of surface codes: Large-scale quantum computing with lattice
  surgery}.
\newblock \emph{Quantum}, 3:\penalty0 128, 2019.
\newblock \doi{10.22331/q-2019-03-05-128}.

\bibitem[Markov et~al.(2008)Markov, Patel, and Hayes]{markov2008optimal}
Ketan Markov, Igor Patel, and John Hayes.
\newblock Optimal synthesis of linear reversible circuits.
\newblock \emph{Quantum Information and Computation}, 8\penalty0
  (3\&4):\penalty0 0282--0294, 2008.
\newblock \doi{10.5555/2011763.2011767}.

\bibitem[Martinez et~al.(2016)Martinez, Monz, Nigg, Schindler, and
  Blatt]{martinez2016compiling}
Esteban~A Martinez, Thomas Monz, Daniel Nigg, Philipp Schindler, and Rainer
  Blatt.
\newblock Compiling quantum algorithms for architectures with multi-qubit
  gates.
\newblock \emph{New Journal of Physics}, 18\penalty0 (6):\penalty0 063029,
  2016.
\newblock \doi{10.1088/1367-2630/18/6/063029}.

\bibitem[Maslov(2016)]{maslov2016advantages}
Dmitri Maslov.
\newblock Advantages of using relative-phase toffoli gates with an application
  to multiple control toffoli optimization.
\newblock \emph{Physical Review A}, 93\penalty0 (2):\penalty0 022311, 2016.
\newblock \doi{10.1103/PhysRevA.93.022311}.

\bibitem[Maslov and Nam(2018)]{maslov2018use}
Dmitri Maslov and Yunseong Nam.
\newblock Use of global interactions in efficient quantum circuit
  constructions.
\newblock \emph{New Journal of Physics}, 20\penalty0 (3):\penalty0 033018,
  2018.
\newblock \doi{10.1088/1367-2630/aaa398}.

\bibitem[Nam et~al.(2018)Nam, Ross, Su, Childs, and Maslov]{nam2018automated}
Yunseong Nam, Neil~J Ross, Yuan Su, Andrew~M Childs, and Dmitri Maslov.
\newblock Automated optimization of large quantum circuits with continuous
  parameters.
\newblock \emph{npj Quantum Information}, 4\penalty0 (1):\penalty0 23, 2018.
\newblock \doi{10.1038/s41534-018-0072-4}.

\bibitem[S{\o}rensen and M{\o}lmer(1999)]{molmersorensen1999}
Anders S{\o}rensen and Klaus M{\o}lmer.
\newblock Quantum computation with ions in thermal motion.
\newblock \emph{Physical review letters}, 82\penalty0 (9):\penalty0 1971, 1999.
\newblock \doi{10.1103/PhysRevLett.82.1971}.

\bibitem[van~de Wetering(2020)]{van_de_wetering_zx-calculus_2020}
John van~de Wetering.
\newblock {ZX}-calculus for the working quantum computer scientist.
\newblock \emph{arXiv:2012.13966 [quant-ph]}, 2020.
\newblock URL \url{http://arxiv.org/abs/2012.13966}.

\bibitem[Yung et~al.(2014)Yung, Casanova, Mezzacapo, Mcclean, Lamata,
  Aspuru-Guzik, and Solano]{yung2014transistor}
M-H Yung, Jorge Casanova, Antonio Mezzacapo, Jarrod Mcclean, Lucas Lamata, Alan
  Aspuru-Guzik, and Enrique Solano.
\newblock From transistor to trapped-ion computers for quantum chemistry.
\newblock \emph{Scientific reports}, 4:\penalty0 3589, 2014.
\newblock \doi{10.1038/srep03589}.

\bibitem[Zargaleh et~al.(2018)Zargaleh, Hameau, Eble, Margaillan, von
  Bardeleben, Cantin, and Gao]{zargaleh2018nitrogen}
Soroush~Abbasi Zargaleh, Sophie Hameau, Benoit Eble, F~Margaillan,
  Hans~J{\"u}rgen von Bardeleben, Jean-Louis Cantin, and Weibo Gao.
\newblock Nitrogen vacancy center in cubic silicon carbide: A promising qubit
  in the 1.5 $\mu$ m spectral range for photonic quantum networks.
\newblock \emph{Physical Review B}, 98\penalty0 (16):\penalty0 165203, 2018.
\newblock \doi{10.1103/PhysRevB.98.165203}.

\bibitem[Zhang and Chen(2019)]{zhang2019optimizing}
Fang Zhang and Jianxin Chen.
\newblock Optimizing {T} gates in {Clifford+T} circuit as $\pi/4$ rotations
  around {Paulis}.
\newblock Preprint, 2019.

\end{thebibliography}

\appendix

\end{document}